\newcommand{\integerrange}[2]{[#1\!:\!#2]}
\newcommand{\natrange}[1]{\left[#1\right]}
\newcommand{\Prob}[1]{\operatorname{Pr}\left[#1 \right]}
\newcommand{\semiorderedPattern}[2]{\mathsf{SOP}(#1, #2)}
\newcommand{\subpermutations}[2]{S_{#1,#2}}
\newcommand{\identityperm}{\operatorname{id}}
\newcommand{\lcm}[1]{\operatorname{lcm}\mleft(#1\mright)}
\newcommand{\substr}[3]{#1_{[#2..#3]}}
\newcommand{\kderang}[2]{\mathsf{Derang}(#1,#2)}
\newcommand{\wasteclass}[2]{\mathsf{Wst}(#1,#2)}
\newtheorem{theorem}{Theorem}
\newtheorem{proposition}[theorem]{Proposition}
\newtheorem{example}[theorem]{Example}
\newtheorem{remark}[theorem]{Remark}
\definecolor{Gray}{gray}{0.85}
\newcolumntype{a}{>{\columncolor{Gray}}c}
\newcolumntype{b}{>{\columncolor{white}}c}
\def\PageTopMargin{1cm}
\def\PageLeftMargin{1cm}
\newcommand\atxy[3]{%
  \AddThispageHook{\hbox{\smash{\hspace*{\dimexpr-\PageLeftMargin-\hoffset+#1\relax}%
    \raisebox{\dimexpr\PageTopMargin+\voffset-#2\relax}{\textcolor{gray}{#3}}}}}}
\begin{document}

\title{SAT-Based Search for \\ Minwise Independent Families}%

\author{Enrico Iurlano\,\orcidlink{0000-0001-7528-0834}\, Günther~R. Raidl\,\orcidlink{0000-0002-3293-177X}\,}

\date{December 16, 2024}

\maketitle
\begingroup%
\def\thefootnote{\fnsymbol{footnote}}\footnote[0]{Algorithms and Complexity Group, TU Wien, Favoritenstraße 9--11/192-01, Vienna, 1040, Austria \par Email:
  \texttt{\{eiurlano|raidl\}@ac.tuwien.ac.at}}%
\endgroup
\begin{abstract}
  Proposed for rapid document similarity estimation in web search engines, the celebrated property of minwise independence imposes highly symmetric constraints on a family $\mathcal{F}$ of permutations of $\{1,\ldots, n\}$:
  The property is fulfilled by $\mathcal{F}$ if for each $j\in \{1,\ldots,n\}$, any cardinality-$j$ subset $X\subseteq \{1,\ldots,n\}$, and any fixed element $x^\ast\in X$, it occurs with probability $1/j$ that a randomly drawn permutation $\pi$ from $\mathcal{F}$ satisfies $\pi(x^\ast)=\min \{\pi(x) : x\in X\}$.
  The central interest is to find a family with fewest possible members meeting the stated constraints.
  We provide a framework that, firstly, is realized as a pure SAT model and, secondly, generalizes a heuristic of Mathon and van Trung to the search of these families.
  Originally, the latter enforces an underlying group-theoretic decomposition to achieve a significant speed-up for the computer-aided search of structures which can be identified with so-called rankwise independent families.
  We observe that this approach is suitable to find provenly optimal new representatives of minwise independent families while yielding a decisive speed-up, too.
  As the problem has a naive search space of size at least $(n!)^n$, we also carefully address symmetry breaking.
  Finally, we add a bijective proof for a problem encountered by Bargachev when deriving a lower bound on the number of members in a minimal rankwise independent family.
\end{abstract}
\begin{center}
  \textbf{Keywords:} SAT solving $\cdotp$ Minwise independence $\cdotp$ Total order modeling
\atxy{3.75cm}{25.45cm}{\parbox{18.5cm}{
    \footnotesize \texttt{Copyright \textcopyright{} 2024, the authors.}
    \texttt{This document is a preprint.}
  }}
\end{center}

\section{Introduction}
The concept of ($k$-restricted) minwise independence was introduced by Broder et al.~\cite{broder2000minwise} in the context of rapid similarity estimation within large document collections on the World Wide Web.
The result of their considerations is ultimately the \texttt{MinHash} algorithm, which is widely used by practitioners in the field of big data (see~\cite{zamora2016hashing} as an example of thousands of other works that build on~\cite{broder2000minwise}).
It can accurately approximate the pairwise Jaccard similarity between high-dimensional data by precomputing in linear time the so-called \emph{sketches}~\cite{broder2000minwise} associated to the data.
Let us take a closer look at the most important observation used for this:
For estimating the Jaccard similarity $r(A,B):=|A\cap B|/|A\cup B|$ between two sets $A,B\subseteq\natrange{n}:=\{1, \ldots,n\}$, it was figured out~\cite{broder2000minwise} that, with $E_{\pi;A,B}$ describing the scenario of coincidence
\begin{equation*}
  \min\{\pi(a): a\in A\} = \min\{\pi(b): b\in B\},
\end{equation*}
we have
\begin{equation}
  \Prob{E_{\pi;A,B} \,|\,\pi\text{ drawn uniformly at random from }S_n}= r(A, B),\label{eq:jaccardsimilarity-approximation-randomly-drawing}
\end{equation}
where $S_n$ stands for the symmetric group consisting of all permutations, i.e., bijective functions from $\natrange{n}$ onto $\natrange{n}$.
Hence, the Jaccard similarity can be characterized by the (conditional) probability stated in~\eqref{eq:jaccardsimilarity-approximation-randomly-drawing}.
However, the difficulty of accurately approximating this similarity measure $r$ via random sampling lies in the fact that, for larger values of $n$, it is hard to draw a permutation from $S_n$---even approximately---uniformly at random.
The reason is the loss of accuracy of pseudorandom number generators in large numeric scopes.
Therefore, Broder et al.~\cite{broder2000minwise} studied families of permutations, which they called \emph{minwise independent}, having less members than $S_n$ and yet ensuring the identity in~\eqref{eq:jaccardsimilarity-approximation-randomly-drawing}.

Minwise independence later became a useful concept and tool for many other applications, including, but not limited to, the detection of spam~\cite{li2011theory}, machine learning (dimensionality reduction)~\cite{zamora2016hashing}, and derandomization as well as computational geometry~\cite{mulmuley1994computational}.

\medskip

We now formally introduce the most basic concepts and properties of minwise independence.
Following~\cite{broder2000minwise}, a finite non-empty family $\mathcal{F}=(\pi_1,\ldots, \pi_d)$ of permutations $\pi_i\in S_n$, $i=1,\ldots,d$, together with a probability distribution on the members of $\mathcal{F}$, $p=(p_i)_{i=1}^d\in[0,1]^{d}$, is \emph{$k$-restricted minwise independent} if for each $j\in\natrange{k}$ for each $X\subseteq \natrange{n}$ with $|X|=j$, and each $x^\ast$ in $X$,
\begin{equation}
  \Prob{\pi(x^\ast) = \min_{x\in X} \pi(x) ~\vert~ \pi\text{ drawn from }\mathcal{F}\text{ randomly according to }p} = \frac{1}{j}.\label{eq:k-restriced-minwise-with-probability-distribution-definition}
\end{equation}
If not stated otherwise, in the following, we will always assume a uniform probability distribution on the members of a family, i.e., $p = (1/d)_{i=1}^d$.
Let us denote by $\subpermutations{n}{k}$ the set of injective functions with domain $\natrange{k}$ and codomain $\natrange{n}$, henceforth called \emph{$k$-subpermutations}, and let us introduce $\semiorderedPattern{n}{j} := \{(s_1,\ldots,s_j)\in \natrange{n}^j: s_2<s_3<\ldots < s_j \wedge s_1 \not\in\{s_2,\ldots,s_j\}\}$ with $1\leq j\leq n$, which we call the set of \emph{semiordered patterns}.
We can state the equality~\eqref{eq:k-restriced-minwise-with-probability-distribution-definition} equivalently as
\begin{equation}
  \lvert\{i\in \natrange{d} : \pi_i(s_1) = \min\{\pi_i(s_1), \pi_i(s_2),\ldots,\pi_i(s_{j})\}\}\rvert = \frac{d}{j},\label{eq:minwise-independence-definition-via-cardinality-counting}
\end{equation}
for each $j\in\natrange{k}$ and each $(s_1,s_2,\ldots,s_j)\in\semiorderedPattern{n}{j}$, implying that $d$ must be a multiple of $\lcm{\natrange{k}}$, the least common multiple of the numbers $1,2,\ldots, k$.
We call a $k$-restricted minwise independent family $\mathcal{F}=(\pi_1,\ldots,\pi_d)\subseteq S_n$ \emph{optimal} if for all $\tilde{d}<d$ there is no $k$-restricted minwise independent family $\widetilde{\mathcal{F}}=(\pi_1,\ldots,\pi_{\tilde{d}})\subseteq S_n$.

\begin{example}
  The family $\mathcal{F}:= (\pi_1,\ldots,\pi_6)\subseteq S_4$ given by
  \begin{equation*}
    ((2,3,1,4), (1,4,2,3), (4,1,2,3), (2,3,4,1), (2,1,4,3), (4,3,2,1))
  \end{equation*}
  is $3$-restricted minwise independent---the permutations are here given in tuple notation.
  Notice that for $(2,1,4)\in\semiorderedPattern{4}{3}$, we have $2=|\{\pi_3, \pi_5\}|=6/3$ and $\pi_3(2) = 1 = \min\{1,4,3\} = \min \{\pi_3(2), \pi_3(1) ,\pi_3(4)\}$ as well as $\pi_5(2)=1=\min\{1,2,3\}=\min\{\pi_5(2),\pi_5(1), \pi_5(4)\}$.
  A respective observation can be made for all $12$ elements in $\semiorderedPattern{4}{3}$, all $12$ elements in $\semiorderedPattern{4}{2}$, and all four elements in $\semiorderedPattern{4}{1}$.
\end{example}

\begin{remark}\label{rem:identifiability-perfect-sequence-covering-arrays-rankwise-independent-families-and-further-literature}
  The more constrained class of \emph{$k$-rankwise independent families}~\cite{tarui2003nearly} is defined by the requirement that for each subpermutation $\sigma\in\subpermutations{n}{k}$
  \begin{equation}
    \lvert\{i\in \natrange{d} : \pi_i(\sigma(1)) < \ldots < \pi_i(\sigma(k))\}\rvert = d/k!. \label{eq:rankwise-indep-formulation}
  \end{equation}
  As recognized in~\cite{iurlano2023growth}, the latter could be equivalently described in the language of \emph{perfect sequence covering arrays}~\cite{yuster2020perfect} for which several construction/search approaches have been proposed~\cite{levenshtein1992perfect,mathon1999directed,yuster2020perfect,na2023group,gentle2023perfect,iurlano2023growth,gentle2023polynomial} and which form alternatives to those established for rankwise independence~\cite{tarui2003nearly,bargachev2004improved}.
\end{remark}
\begin{remark}\label{rem:three-restricted-minwise-independence-is-three-rankwise-independence}
  It is an observation in~\cite{tarui2003nearly} that the validity of~\eqref{eq:minwise-independence-definition-via-cardinality-counting} for $j\leq 3$ is equivalent to the validity of~\eqref{eq:rankwise-indep-formulation} with $k=3$, and consequently $3$-restricted minwise independence is precisely $3$-rankwise independence.
\end{remark}
The next theorem resembles the tightest so-far known asymptotic bounds for general $n$ and $k$.
For integers $n$, we denote by $n!$ the factorial and by $!n := (n!)\cdot \sum_{i=0}^n {(-1)}^{i}/(i!)$ the \emph{subfactorial}.
\begin{theorem}\label{thm:asymptotic-overview}
  Let $\mathcal{F}\subseteq S_n$ be $k$-restricted minwise independent and $\mathcal{G}\subseteq S_n$ be $k$-rankwise independent.
  Then, the following estimates apply:
  \begin{enumerate}[(i)]
    \item $|\mathcal{F}| \geq \max \{n, \lcm{\natrange{k}}\}$~\cite{itoh2000permutations,bargachev2004improved}. \label{ite:lower-bound-minwise-independent-families}
    \item $|\mathcal{F}|\leq n^{(1+(1/\ln n))k}\lcm{\natrange{k-1}}$~\cite{itoh2000permutations}. \label{ite:upper-bound-minwise-independent-families}
    \item $|\mathcal{G}|\geq \sum_{i=0}^{\lfloor k/2\rfloor} !i \binom{n}{i} =: E(n,k)$ for even $k$, otherwise, when $k$ is odd, $|\mathcal{G}|\geq E(n,k)\allowbreak+\,!\lceil k/2\rceil\binom{n-1}{\lfloor k/2\rfloor}$~\cite{bargachev2004improved}.\label{ite:bargachev-lower-bound}
    \item There is some constant $C>0$ (independent of $n$ and $k$), for which $|\mathcal{G}|\leq (Cn)^{Ck}$~\cite{kuperberg2017probabilistic}.
          There is some constant $D>0$ (independent of $n$ and $k$), for which $|\mathcal{G}|\leq (Dn)^{35k}$~\cite{harvey2024explicit}.\label{ite:upper-bound-rankwise-independent-families}
  \end{enumerate}
\end{theorem}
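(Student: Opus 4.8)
The plan is to obtain the four items from largely independent arguments, only one of which---the lower bound~(iii)---requires real work; the others I would cite.

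For~(i), the divisibility part $|\mathcal F|\ge\lcm{\natrange k}$ was already observed after~\eqref{eq:minwise-independence-definition-via-cardinality-counting}: that identity, read for $j=1,\dots,k$, forces $d$ to be a positive multiple of every $j\le k$, hence $d\ge\lcm{\natrange k}$. For $|\mathcal F|\ge n$ (which I read, as elsewhere in the paper, for $k\ge3$---for $k\le2$ the summand $n$ is spurious, e.g.\ $((1,2,\dots,n),(n,\dots,2,1))$ is already $2$-restricted minwise independent) I would reduce to~(iii) (proved below): restricting the range of $j$ in~\eqref{eq:minwise-independence-definition-via-cardinality-counting} to $j\le3$ shows $\mathcal F$ is $3$-restricted minwise independent, equivalently (Remark~\ref{rem:three-restricted-minwise-independence-is-three-rankwise-independence}) $3$-rankwise independent, and item~(iii) at $k=3$ evaluates to $E(n,3)+{!2}\binom{n-1}{1}=1+(n-1)=n$.

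The heart is~(iii), for which I would run a Fisher/Rao-type rank bound on the space of real functions $S_n\to\mathbb R$. For $0\le m\le n$ let $V_m$ be the span of all relative-order indicators $\pi\mapsto\mathbf 1[\text{the restriction of }\pi\text{ to }T\text{ realizes a prescribed linear order}]$ over $T\subseteq\natrange n$ with $|T|\le m$. Unwinding~\eqref{eq:rankwise-indep-formulation} shows that $k$-rankwise independence is exactly the statement that the uniform average over $\mathcal G$ and the uniform average over $S_n$ agree on all of $V_k$. Hence, if $f\in V_{\lfloor k/2\rfloor}$ vanishes on $\mathcal G$, then $f^2\in V_{2\lfloor k/2\rfloor}\subseteq V_k$, so $0=\frac1{|\mathcal G|}\sum_{\pi\in\mathcal G}f(\pi)^2=\mathbb E_{\pi\in S_n}[f(\pi)^2]$ and $f\equiv0$; thus the restriction map $V_{\lfloor k/2\rfloor}\to\mathbb R^{\mathcal G}$ is injective and $|\mathcal G|\ge\dim V_{\lfloor k/2\rfloor}$. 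It then remains to show $\dim V_m=\sum_{i=0}^{m}{!i}\binom ni$, i.e.\ that the primitive part of the order-statistics carried by each $i$-subset has dimension $!i$; with $m=\lfloor k/2\rfloor$ this is precisely $E(n,k)$, giving~(iii) for even $k$. For odd $k$ there is slack ($2\lfloor k/2\rfloor=k-1<k$), and the extra term ${!\lceil k/2\rceil}\binom{n-1}{\lfloor k/2\rfloor}$ arises by enlarging $V_{\lfloor k/2\rfloor}$ with a carefully chosen family of degree-$\lceil k/2\rceil$ functions whose mutual products still lie in $V_k$ and which stay independent modulo $V_{\lfloor k/2\rfloor}$---this is Bargachev's sharpening~\cite{bargachev2004improved}.

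I expect the genuine obstacle to be exactly the combinatorics underlying $\dim V_m=\sum_i{!i}\binom ni$ and its odd-$k$ refinement: counting the primitive order-functions on $m$ points as the subfactorial $!m$, and, in the odd case, pinning down which degree-$\lceil k/2\rceil$ functions survive---the appearance of $\binom{n-1}{\lfloor k/2\rfloor}$ rather than the naive $\binom{n}{\lfloor k/2\rfloor}$ is the subtle point. This is the identity whose derivation caused trouble in~\cite{bargachev2004improved}, and it is where I would concentrate effort, supplying a bijective proof of the key step (the new contribution this theorem needs). Finally, the upper bounds I would merely quote: $|\mathcal F|\le n^{(1+1/\ln n)k}\lcm{\natrange{k-1}}$ is the family of Itoh--Takei--Tarui~\cite{itoh2000permutations} (note $n^{1/\ln n}=e$, so the bound has order $(en)^k\lcm{\natrange{k-1}}$), while $|\mathcal G|\le(Cn)^{Ck}$ is the probabilistic design-existence result of Kuperberg--Lovett--Peled~\cite{kuperberg2017probabilistic} and $|\mathcal G|\le(Dn)^{35k}$ the explicit construction of~\cite{harvey2024explicit}; none of these interacts with the SAT machinery developed in the rest of the paper.
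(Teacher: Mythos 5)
The paper gives no proof of this theorem: it is stated purely as a compendium of known results, and each item is supported only by the citations in its statement. Your plan to quote (ii) and (iv) and to attribute the hard content of (i) and (iii) to \cite{itoh2000permutations,bargachev2004improved} is therefore exactly the paper's route; everything beyond that in your write-up is extra material, and it is sound as far as it goes. The divisibility argument for $\lcm{\natrange{k}}$ is the one the paper itself records after~\eqref{eq:minwise-independence-definition-via-cardinality-counting}; deducing the $\geq n$ part of (i) from (iii) at $k=3$ via Remark~\ref{rem:three-restricted-minwise-independence-is-three-rankwise-independence} (indeed $E(n,3)+{!2}\binom{n-1}{1}=1+(n-1)=n$) is valid, and your caveat that the $n$ term needs $k\geq 3$ is a fair reading of the statement, since your two-permutation family is $2$-restricted minwise independent for every $n$. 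Your rank-bound skeleton for (iii) has the right shape (it is essentially the structure of Bargachev's argument), but note it is incomplete exactly where you say it is: the identity $\dim V_m=\sum_{i=0}^{m}{!i}\binom{n}{i}$ and the construction behind the odd-$k$ term $ {!\lceil k/2\rceil}\binom{n-1}{\lfloor k/2\rfloor}$ are asserted rather than proved, so as a self-contained proof of (iii) there is still a gap --- acceptable here only because the theorem is meant to be cited, not reproved. Finally, the combinatorial crux you single out, counting the ``primitive'' order functions on $m$ points by the subfactorial ${!m}$, is precisely Bargachev's Lemma~4 on subpermutations without waste indices, which Section~\ref{sec:bijecting} of the paper (Theorem~\ref{thm:bijection-derangements-onto-waste-permutations}, the bijection $\kderang{n}{k}\to\wasteclass{n}{k}$) revisits bijectively; so your diagnosis of where the real work lies agrees with the paper, even though within Theorem~\ref{thm:asymptotic-overview} itself nothing more than the citations was intended.
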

\begin{remark}
  In~\cite[p.~141]{itoh2000permutations} a simple transformation is given, which turns a $k$-restricted \mbox{min-/}rankwise independent family $\mathcal{F} \subseteq S_n$ into a $k$-restricted min-/rankwise independent family $\widetilde{\mathcal{F}}\subseteq S_{\tilde{n}}$ with $|\mathcal{F}|=|\widetilde{\mathcal{F}}|$, for any $\tilde{n}\leq n$.
  By the contrapositive, non-existence of $k$-restricted minwise independent families $\widetilde{\mathcal{F}}\subseteq S_{\tilde{n}}$ of $d$ members implies non-existence for such families $\mathcal{F}\subseteq S_n$ with larger $n\geq \tilde{n}$, provided $|\mathcal{F}|=d$.
\end{remark}
Finally, let us recall some elementary and here crucial concepts from group theory: For any group $A$ with binary group operation $\circ : A\times A \to A$, neutral element $e\in A$, and element inversion $(\cdot)^{-1}: A\to A$, for each subgroup $G$ of $A$ we can form a so-called left-coset by picking an arbitrary $a\in A$ and forming $aG := \{a\circ g : g\in G\}$.
Similarly, right-cosets $Ga := \{g\circ a: g \in G\}$ are formable for $a\in A$.
We will exclusively focus on subgroups of the symmetric group $S_n$ with the composition of functions as operation (see also Remark~\ref{rem:clarification-notation-composition-of-functions-operator}).

For minwise independence, the following result seems to be the first in the literature which (after casting the original assertion to the present group theoretical setting) proves the existence of a minwise independent family by a particular decomposition into \emph{right-cosets}.
\begin{theorem}[\cite{bargachev2006some}]\label{thm:bargachev-first-right-coset-approach-for-minwise-independence}
  Let $k\geq 3$ be odd and $\mathcal{F}=(\theta_1,\ldots,\theta_d)\subseteq S_n$ be $k$-restricted minwise independent.
  With $G=\{\gamma_1,\gamma_2\} := \{\identityperm, \sigma\}$ denoting the subgroup of $S_n$ containing the identity permutation and the order reversing permutation $\sigma: \natrange{n} \to \natrange{n}$, $i\mapsto n+1-i$, we have that $(\gamma_\ell \circ \theta_m : \ell\in \natrange{d}, m\in\natrange{2})$ is a $(k+1)$-restricted minwise independent family counting $2d$ members.
\end{theorem}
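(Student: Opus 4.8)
The plan is to verify the counting characterization~\eqref{eq:minwise-independence-definition-via-cardinality-counting} directly for the $2d$-member family $\mathcal{F}':=(\theta_1,\ldots,\theta_d,\sigma\circ\theta_1,\ldots,\sigma\circ\theta_d)$. Fix $j\in\natrange{k+1}$ and $(s_1,\ldots,s_j)\in\semiorderedPattern{n}{j}$, and set $S:=\{s_1,\ldots,s_j\}$. A member $\identityperm\circ\theta_\ell=\theta_\ell$ contributes to the count in~\eqref{eq:minwise-independence-definition-via-cardinality-counting} exactly when $\theta_\ell(s_1)=\min_{x\in S}\theta_\ell(x)$, while a member $\sigma\circ\theta_\ell$ contributes exactly when $\theta_\ell(s_1)=\max_{x\in S}\theta_\ell(x)$, since $\sigma$ is strictly order reversing and hence $\min_{x\in S}\sigma(\theta_\ell(x))=\sigma\bigl(\max_{x\in S}\theta_\ell(x)\bigr)$. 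Writing $A_{\min}$ (resp.\ $A_{\max}$) for the number of indices $\ell\in\natrange{d}$ with $\theta_\ell(s_1)=\min_{x\in S}\theta_\ell(x)$ (resp.\ $\theta_\ell(s_1)=\max_{x\in S}\theta_\ell(x)$), the count for $\mathcal{F}'$ equals $A_{\min}+A_{\max}$, so the whole proof reduces to establishing $A_{\min}+A_{\max}=2d/j$ for every such $j$ and $S$.

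The key ingredient is a \emph{max-analogue of minwise independence}: for every $R\subseteq\natrange{n}$ with $|R|\leq k$ and every $x^\ast\in R$, exactly $d/|R|$ of the $\theta_\ell$ satisfy $\theta_\ell(x^\ast)=\max_{x\in R}\theta_\ell(x)$. I would prove this by inclusion--exclusion on the complementary event: $x^\ast$ is not mapped to the maximum of $R$ iff $\theta_\ell(x^\ast)<\theta_\ell(y)$ for some $y\in R\setminus\{x^\ast\}$, and for a nonempty $T\subseteq R\setminus\{x^\ast\}$ the event ``$\theta_\ell(x^\ast)<\theta_\ell(y)$ for all $y\in T$'' is precisely ``$x^\ast$ is mapped to the minimum of $\{x^\ast\}\cup T$'', which by~\eqref{eq:minwise-independence-definition-via-cardinality-counting} holds for a $1/(|T|+1)$-fraction of the members of $\mathcal{F}$, since $|T|+1\leq|R|\leq k$. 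Collecting the $\binom{|R|-1}{t}$ subsets of each size $t$ and invoking the elementary identity $\sum_{t=0}^{m}\binom{m}{t}(-1)^t/(t+1)=1/(m+1)$ (e.g.\ integrate $(1-x)^m$ over $[0,1]$) with $m=|R|-1$ collapses the alternating sum to $1/|R|$, which is the claim.

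Granting this, the case $j\leq k$ is immediate: $A_{\min}=d/j$ by minwise independence of $\mathcal{F}$ and $A_{\max}=d/j$ by the max-analogue, so $A_{\min}+A_{\max}=2d/j$. The decisive case is $j=k+1$, and this is where oddness of $k$ is used. Now $|S\setminus\{s_1\}|=k$, and applying the same inclusion--exclusion to ``$\theta_\ell(s_1)=\min_{x\in S}\theta_\ell(x)$'' via its complement ``$\theta_\ell(s_1)>\theta_\ell(y)$ for some $y\in S\setminus\{s_1\}$'', for every nonempty $T\subseteq S\setminus\{s_1\}$ with $|T|\leq k-1$ the $\mathcal{F}$-fraction of members satisfying ``$\theta_\ell(s_1)>\theta_\ell(y)$ for all $y\in T$'' equals $1/(|T|+1)$ by the max-analogue (this says $s_1$ is mapped to the maximum of the $(\leq k)$-element set $\{s_1\}\cup T$), whereas the single leftover term, indexed by $T=S\setminus\{s_1\}$, is exactly $A_{\max}/d$ and carries the sign $(-1)^{|T|-1}=(-1)^{k-1}=+1$. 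Rearranging gives $A_{\min}+A_{\max}=d\bigl(1-\sum_{t=1}^{k-1}\binom{k}{t}(-1)^{t-1}/(t+1)\bigr)$, and the same binomial identity (with $m=k$, using $(-1)^k=-1$ because $k$ is odd) evaluates the parenthesis to $2/(k+1)$; hence $A_{\min}+A_{\max}=2d/(k+1)$, as required --- and in particular $(k+1)\mid 2d$ drops out for free, being a cardinality.

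I expect the last computation to be the main obstacle: one must notice that for $j=k+1$ the inclusion--exclusion is ``all but one term'' controlled by minwise independence and its max-analogue, that the single uncontrolled term is precisely the quantity $A_{\max}$ that is being added to $A_{\min}$ anyway, and that it appears with a \emph{positive} sign exactly when $k$ is odd --- which is what makes the two pieces combine into the correct total $2d/(k+1)$. The max-analogue and the cases $j\leq k$ are comparatively routine, resting only on inclusion--exclusion and the identity $\int_0^1(1-x)^m\,dx=1/(m+1)$.
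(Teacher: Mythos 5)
Your argument is correct, and I verified the computations: the reduction to showing $A_{\min}+A_{\max}=2d/j$ is right (since $\sigma$ is order-reversing, $\sigma\circ\theta_\ell$ sends $s_1$ to the minimum over $S$ exactly when $\theta_\ell$ sends it to the maximum); the max-analogue for sets of size at most $k$ follows correctly by inclusion--exclusion from minwise independence together with $\sum_{t=0}^{m}\binom{m}{t}(-1)^t/(t+1)=1/(m+1)$; and in the decisive case $j=k+1$ the only term not controlled by the max-analogue is indeed the one indexed by $T=S\setminus\{s_1\}$, which equals $A_{\max}$ and enters with sign $(-1)^{k-1}=+1$ precisely because $k$ is odd, after which $1-\sum_{t=1}^{k-1}\binom{k}{t}(-1)^{t-1}/(t+1)=2/(k+1)$ gives $A_{\min}+A_{\max}=2d/(k+1)$. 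Note, however, that the paper itself does not prove this theorem: it is quoted with the citation to Bargachev and used as a black box (e.g.\ to build the $(36,8,4)$ instance in Section~3), so there is no in-paper proof to compare against. Your derivation is a complete, self-contained proof in the spirit of the original coset-based argument of Bargachev: the ``max-wise'' counting lemma plus the parity observation is exactly what makes the doubling by the order-reversing involution work only for odd $k$, and your write-up makes that mechanism explicit.
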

We also emphasize that a connection of minwise independence to group theory has been studied within a more restrictive setting in~\cite{cameron2007minwise}.

\medskip

The goal of this paper is to provide a SAT framework enabling the study---for reasonably small scales of $n$ and $k$---the existence and nature of $k$-restricted minwise independent families $\mathcal{F}\subseteq S_n$ with a pre-specified number of members $d$.
Analogously to~\cite{na2023group,gentle2023perfect}, a particular emphasis will be set on the question whether (near-) optimal such families can be ``spanned'' by suitable---possibly large---subgroups of $S_n$ and to quantify a potential speed-up in terms of solving time.

\section{A SAT approach}\label{sec:a-sat-approach}

The main idea shares the natural approach from~\cite{banbara2012generating} (in the setting of Answer Set Programming) where permutations are represented via their order theoretic incidence structure.
In fact, if $\pi\in S_n$ is a permutation, it is disambiguously determined by  full knowledge on $X^\pi = (x^\pi_{i,j})_{i,j=1}^n\in\{0,1\}^{n\times n}$ capturing by $x^\pi_{i,j}$ the information if $\pi(i)<\pi(j)$ ($1$ in the affirmative case, otherwise $0$).
In the following it will become clear that considering just the strict upper-diagonal entries of $X^\pi$ is sufficient (see later asymmetry).
Note that $X^\pi$ can be seen as incidence matrix of a strict total order on $\natrange{n}$ representing the permutation.
We can in fact recover the $j$-th entry of the permutation via $\pi(j) = 1+\sum_{i=1}^n x^\pi_{i,j}$.
All solutions for $x_{i,j}^\pi$ subject to the following constraints (irreflexivity, asymmetry, and transitivity) determine hence a permutation:
\begin{align}
  \neg x^{\pi_\ell}_{i,i}                                                      & ~~~\forall\ell\in\natrange{d}~\forall i\in\natrange{n},\label{eq:irreflexitivity}                                                                           \\
  \neg x^{\pi_\ell}_{i,j} \vee \neg x^{\pi_\ell}_{j,i}                         & ~~~\forall\ell\in\natrange{d}~\forall i\in\natrange{n}~\forall j\in\natrange{n},\label{eq:asymmetry}                                                        \\
  \neg x^{\pi_\ell}_{i,j} \vee \neg x^{\pi_\ell}_{j,h} \vee x^{\pi_\ell}_{i,h} & ~~~\forall \ell\in\natrange{d}~\forall i\in\natrange{n}~\forall j\in\integerrange{i+1}{n}~\forall h\in\natrange{n}\setminus \{i,j\}.\label{eq:transitivity}
\end{align}
These constraints are already stated in conjunctive normal form (CNF), and for two integers $p$ and $p$, we denote by $\integerrange{p}{q} := \{p,p+1,\ldots, q\}$.

To model property~\eqref{eq:minwise-independence-definition-via-cardinality-counting} in the definition of minwise independence, we need to impose specific cardinality constraints which are stated in~\eqref{eq:sopeq-fulfillment-without-j-at-most-three}---note that $j\leq 3$ is intentionally excluded due to a more favorable replacement in~\eqref{eq:explicit-three-subpermutation-fulfillment}:
\begin{align}
  \sum_{\ell=1}^d \left[\bigwedge_{h=2}^j x^{\pi_\ell}_{s_1,s_h}\right] \leq d/j                & ~~~\forall j \in \integerrange{4}{k} ~\forall s\in \semiorderedPattern{n}{j},\label{eq:sopeq-fulfillment-without-j-at-most-three} \\
  \sum_{\ell=1}^d \left[\bigwedge_{h=2}^3 x^{\pi_\ell}_{\sigma(h-1),\sigma(h)}\right] \leq d/3! & ~~~ \forall \sigma\in \subpermutations{n}{3} \label{eq:explicit-three-subpermutation-fulfillment}.
\end{align}
In fact, we avoid the higher upper bound $d/3$ (which in~\eqref{eq:sopeq-fulfillment-without-j-at-most-three} would result for $j=3$) and also can relinquish the case $j=2$---this is justified by the aforementioned particularity that $3$-restricted minwise independence is precisely $3$-rankwise independence.

Note that upper bounds instead of equalities appear in~\eqref{eq:sopeq-fulfillment-without-j-at-most-three}--\eqref{eq:explicit-three-subpermutation-fulfillment} which provide a considerable simplification in view of a conversion to CNF.
These seemingly weaker conditions are however equivalent to the ones with equalities: Any not attained upper bound in~\eqref{eq:sopeq-fulfillment-without-j-at-most-three}--\eqref{eq:explicit-three-subpermutation-fulfillment} would imply a strict excess of the upper bound for some other $s'\in\semiorderedPattern{n}{j}$ respectively $\sigma'\in\subpermutations{n}{3}$---infeasibility would be a consequence; in fact, the number of patterns/subpermutation conditions met by a family $\mathcal{F}$ is fixed.
How the cardinality constraints are technically translated to CNF is deferred to Section~\ref{sec:computational-experiments}.
\begin{remark}
  If the interest is in modeling $k$-rankwise independence, we can simply replace~\eqref{eq:sopeq-fulfillment-without-j-at-most-three}--\eqref{eq:explicit-three-subpermutation-fulfillment} with $\sum_{\ell=1}^d \left[\bigwedge_{h=2}^k x^{\pi_\ell}_{\sigma(h-1),\sigma(h)}\right] \leq d/k!~~\forall \sigma\in \subpermutations{n}{k}$.
\end{remark}

For symmetry breaking, for a permutation $\pi$, we can consider the binary string resulting from $X^\pi = (x^{\pi}_{i,j})_{i,j=1}^n$ after the following procedure: All side-diagonals of $X^\pi$ lying above the main diagonal are concatenated as string denoted as $z_{\textrm{cat}}[X^\pi] := (X^\pi_{i+1,i+j}: j=2,\ldots n, i=0,\ldots, n-j)$.
It is then natural, without loss of generality, to enforce an ordering on $x^{\pi_\ell}$, $\ell = 1, \ldots, d$, which is based on the lexicographical ordering of their associated strings $z_{\textrm{cat}}[X^{\pi_{\ell}}]$.
Thus, when $\preceq$ denotes the lexicographical ordering relation on binary strings (e.g., $\texttt{0100}\preceq \texttt{0110}$), we require
\begin{align}
  z_{\textrm{cat}}[X^{\pi_{m+1}}] \preceq z_{\textrm{cat}}[X^{\pi_{m}}] & ~~~ \forall \ell \in\natrange{d}.\label{eq:lex-ordering-imposed-on-permutations-based-on-their-string-pearl-up}
\end{align}
The identity permutation, whose upper diagonal part of the incidence matrix consists of exclusively $1$-entries, is hence the minimum element with respect to this ordering.
For $\pi\in S_n$ the string $z_{\textrm{cat}}[X^\pi]$ is of length $n(n-1)/2$.
Therefore, we might prefer the following weakened constraint depending on a chooseable ``accuracy'' parameter $H\in\integerrange{0}{n(n-1)/2}$; for a string $s$ we denote the substring consisting of the first $H$ entries of $s$ by $\substr{s}{1}{H}$:
\begin{align}
  \substr{z_{\textrm{cat}}[X^{\pi_{m+1}}]}{1}{H} \preceq \substr{z_{\textrm{cat}}[X^{\pi_{m}}]}{1}{H} & ~~~ \forall \ell \in \natrange{d}.\label{eq:lexicographically-ordered-subblocks}
\end{align}
In~\cite[entry~A036604]{oeis2024online} one can look up $\mathsf{A}(n)$, the number of  pairwise comparisons (dynamically chooseable) needed to uniquely determine a strict total order on $\natrange{n}$; see also the follow-up paper~\cite{peczarski2004new}.
In the following we regard the choice $H:=\mathsf{A}(n)$ as meaningful guideline but could also change this value for $H$, if a different trade-off between symmetry breaking and lower model complexity is preferred.
Finally, in this setting we can add without loss of generality the following constraint.
As in~\cite[Lemma~4.2]{na2023group}, we preserve minwise independence by composing each permutation in the family with $\pi_1^{-1}$ and afterwards bringing the transformed in lexicographic order:
\begin{align}
  \pi_1=\operatorname{id}\text{, i.e., unit clauses~~~}x^{\pi_1}_{i,j} & ~~~ \forall i\in\natrange{n}~\forall j\in\integerrange{i+1}{n}.
\end{align}

We now transfer to our setting a heuristic approach relying on group theory due to Mathon and van Trung~\cite{mathon1999directed}.
Its successful applicability to the search for perfect sequence covering arrays of higher ``coverage-multiplicity'' has been recently shown~\cite{na2023group}.
In the spirit of these two previous works, we will assume that our families $(\pi_1,\ldots, \pi_d)$ with $\pi_i\in S_n$ and where $\lcm{\natrange{k}}$ is a divisor of $d$ have the following underlying structure:
There is a subgroup $G=\{\gamma_1,\ldots, \gamma_q\}$ of $S_n$ whose order $q = |G|$ is a divisor of $d$ and for which there are ``offsets'' $\theta_1,\ldots,\theta_{d/|G|}\in S_n$ such that
\begin{equation}
  \mathcal{F} = (\pi_1,\ldots,\pi_d)=(\theta_{\ell}\circ \gamma_{m})_{(\ell,m)\in\natrange{d/|G|}\times\natrange{|G|}}.\label{eq:family-composed-of-leftcosets}
\end{equation}
Condition~\eqref{eq:family-composed-of-leftcosets} models the coincidence of $(\pi_1,\ldots,\pi_d)$ with a union of $d/|G|$ not necessarily different left-cosets of a subgroup $G$ of $S_n$.
Hence, having fixed a non-trivial subgroup $G$ of $S_n$, the number of decision variables is reduced, as every choice for an ``offset'' $\theta_i$ automatically fixes a larger set of size $|G|$ of permutations in $\mathcal{F}$.
Note that we will always assume that per default $\gamma_1 := \operatorname{id}$ in our enumeration of $G$.

A model for this heuristic approach in the above given setting consists of the following parts:
Decision variables $X^{\theta_\ell}$, $\ell=1,\ldots, d/|G|$, with permutation constraints as in~\eqref{eq:irreflexitivity}--\eqref{eq:transitivity}, cardinality constraints
\begin{align}
  \sum_{\ell=1}^{d/|G|} \sum_{m=1}^{|G|}  \left[\bigwedge_{h=2}^j x^{\theta_\ell}_{\gamma_m(s_1),\gamma_m(s_h)}\right] \leq d/j                & ~~~\forall j\in\integerrange{4}{k} ~\forall s\in \semiorderedPattern{n}{j},\label{eq:groupversion-sopeq-fulfillment-without-j-at-most-three} \\
  \sum_{\ell=1}^{d/|G|} \sum_{m=1}^{|G|}  \left[\bigwedge_{h=2}^3 x^{\theta_\ell}_{\gamma_m(\sigma(h-1)),\gamma_m(\sigma(h))}\right] \leq d/3! & ~~~ \forall \sigma\in\subpermutations{n}{3}, \label{eq:groupversion-explicit-three-subpermutation-fulfillment}
\end{align}
and lexicographical symmetry breaking, compare~\eqref{eq:lexicographically-ordered-subblocks},
\begin{align}
  \substr{z_{\textrm{cat}}[\theta_{\ell+1}]}{1}{H} \preceq \substr{z_{\textrm{cat}}[\theta_{\ell}]}{1}{H} & ~~~ \forall \ell \in\natrange{d/|G|-1}.
\end{align}

Fixing the trivial subgroup consisting just of the identity permutation $\{\gamma_1\}=\{\identityperm\}$ of $S_n$, we recover precisely the original, non-heuristic model~\eqref{eq:irreflexitivity}--\eqref{eq:explicit-three-subpermutation-fulfillment},~\eqref{eq:lexicographically-ordered-subblocks}.

\begin{remark}\label{rem:clarification-notation-composition-of-functions-operator}
  For clarity, we emphasize that our notation $\alpha\circ\beta$ specifies the function resulting from applying function $\alpha$ to the output of function $\beta$, i.e., $\alpha(\beta(\cdot))$---in contrast to the notation of~\cite{na2023group}.
\end{remark}

In~\cite{na2023group}, also right-cosets are examined, where the postulated equality~\eqref{eq:family-composed-of-leftcosets} translates to
\begin{equation}
  \mathcal{F} = (\pi_1,\ldots,\pi_d)=(\gamma_{m}\circ \theta_{\ell})_{(\ell,m)\in\natrange{d/|G|}\times\natrange{|G|}}.
  \label{eq:family-composed-of-rightcosets}
\end{equation}
In the framework of SAT, this needs a more elaborated modeling approach:
Although we notice that $\gamma_m\circ\theta_\ell={(\theta_\ell^{-1}\circ \gamma_m^{-1})}^{-1}$, leading to the idea to use $\theta_\ell^{-1}$ directly as new decision variable, there is then no direct way to recover via an encoding in CNF the incidence matrix of the inverse of a permutation (where the latter is itself specified by its incidence matrix).
In fact we now face the difficulty to let a symbolic permutation specified by decision variables take action on the domain of a given collection of parameter permutations and return the output of the latter.
This considerably differs from the left-coset approach where the need is to access the incidence matrix of a symbolic permutation in entries pre-permutated by concretely given parameter permutations.

We circumvent this issue by modeling the offsets $\theta_m$ as permutation matrices  $T^{\theta_\ell}=(t_{ij}^{\theta_\ell})_{i,j=1}^n\in\{0,1\}^{n\times n}$ whose bi-stochasticity we enforce via
\begin{align}
  \bigvee_{j=1}^n t_{ij}^{\theta_\ell}                        & ~~~ \forall \ell\in\natrange{d/q}~\forall i\in\natrange{n},                                           \\
  \neg t_{i,j}^{\theta_\ell} \vee \neg t_{i, r}^{\theta_\ell} & ~~~ \forall \ell\in\natrange{d/q}~\forall i\in\natrange{n}~\forall \{j,r\}\in\binom{\natrange{n}}{2}, \\
  \neg t_{i,j}^{\theta_\ell} \vee \neg t_{r, j}^{\theta_\ell} & ~~~ \forall \ell\in\natrange{d/q}~\forall j\in\natrange{n}~\forall \{i,r\}\in\binom{\natrange{n}}{2}.
\end{align}

The following constraint~\eqref{eq:recover-incidence-from-permutation-matrix} stores the incidence matrix of  $\gamma_m\circ\theta_\ell$ for any fixed $\gamma_m$ and a symbolic $\theta_\ell$ inside the fresh variables $x_{i,r}^{\gamma_m\circ\theta_\ell}$, $(i,r)\in \natrange{n}^2$.
The formulation depends on the permutation matrix $T^{\theta_\ell}$ of $\theta_\ell$:
\begin{align}
  x_{i,r}^{\gamma_m\circ\theta_\ell}\leftrightarrow\left( (t_{\gamma_m(r), j}^{\theta_\ell})_{j=1}^{n}\preceq (t_{\gamma_m(i), j}^{\theta_\ell})_{j=1}^{n}\right) & ~~~\forall \ell\in\natrange{d/q}\forall m\in\natrange{q}\forall (i,r)\in\natrange{n}^2.\label{eq:recover-incidence-from-permutation-matrix}
\end{align}
Regarding~\eqref{eq:recover-incidence-from-permutation-matrix}, note that firstly, row $i$ is a lexicographical successor of row $r$ of the permutation matrix of an arbitrary permutation $\pi$ iff $\pi(i) < \pi(r)$; secondly, the permutation matrix of $\gamma_m\circ\theta_\ell$ can be recovered from $T^{\theta_\ell}$ by applying the permutation $\gamma_m$ to its rows, as illustrated in the following example.
\begin{example}\label{exa:example-perm-incidence-comparison-also-after-inner-application-of-perm}
  Let $(\theta(i))_{i=1}^4=(4,1,3,2)$, $(\gamma(i))_{i=1}^4=(4,2,3,1)$.
  Then, $(\theta\circ\gamma(i))_{i=1}^4 = (2,1,3,4)$.
  The permutation matrix, respectively incidence matrix, of $\theta$ is
  \begin{equation*}
    \setlength\arraycolsep{2pt}
    T=\begin{pmatrix}
      0 & 0 & 0 & 1 \\
      1 & 0 & 0 & 0 \\
      0 & 0 & 1 & 0 \\
      0 & 1 & 0 & 0
    \end{pmatrix},\text{ respectively }I=\begin{pmatrix}
      0 & 0 & 0 & 0 \\
      1 & 0 & 1 & 1 \\
      1 & 0 & 0 & 0 \\
      1 & 0 & 1 & 0
    \end{pmatrix}.
  \end{equation*}
  The permutation matrix, respectively incidence matrix, of $\theta\circ\gamma$ is
  \begin{equation*}
    \setlength\arraycolsep{2pt}
    T'=\begin{pmatrix}
      0 & 1 & 0 & 0 \\
      1 & 0 & 0 & 0 \\
      0 & 0 & 1 & 0 \\
      0 & 0 & 0 & 1
    \end{pmatrix},\text{ respectively }I'=\begin{pmatrix}
      0 & 0 & 1 & 1 \\
      1 & 0 & 1 & 1 \\
      0 & 0 & 0 & 1 \\
      0 & 0 & 0 & 0
    \end{pmatrix}.
  \end{equation*}
\end{example}
The main task to be addressed here is accessing the truth value of $x_{i,r}^{\gamma_m \circ \theta_\ell}$, storing as Boolean whether the $i$-th and the $r$-th row are in ascending lexicographic ordering.
A case distinction shows that imposing the subsequent lexicographical ordering on the tuples suitably linking $x^{\gamma_m \circ \theta_\ell}$ to $T^{\theta_\ell}$-entries precisely models such a behavior for $x_{i,r}^{\gamma_m \circ \theta_\ell}$:
\begin{align}
  \text{\eqref{eq:right-coset-first-matrix-lex-comparison}--\eqref{eq:right-coset-second-matrix-lex-comparison} are satisfied } ~~                  & \forall\ell \in \natrange{d/q} \forall m\in\natrange{q}\forall (i,r)\in\natrange{n}^2;                                                                                                 \\
  (x_{i,r}^{\gamma_m \circ \theta_\ell}, \hphantom{\neg}t_{\gamma_m(r), 1}^{\theta_\ell},~\ldots~, \hphantom{\neg}t_{\gamma_m(r), n}^{\theta_\ell}) & \preceq (1,\hspace{8.2mm} \hphantom{\neg}t_{\gamma_m(i), 1}^{\theta_\ell},~\ldots~, \hphantom{\neg}t_{\gamma_m(i), n}^{\theta_\ell})\label{eq:right-coset-first-matrix-lex-comparison}
  \\
  (0, \neg t_{\gamma_m(r), 1}^{\theta_\ell},~\ldots~,\neg t_{\gamma_m(r), n}^{\theta_\ell})                                                         & \preceq
  (x_{i,r}^{\gamma_m \circ \theta_\ell},\neg t_{\gamma_m(i), 1}^{\theta_\ell},~\ldots~,\neg t_{\gamma_m(i), n}^{\theta_\ell})\label{eq:right-coset-second-matrix-lex-comparison}
\end{align}
The implementation of $\preceq$ in SAT is deferred to Section~\ref{sec:computational-experiments}.
What is left, are the analogous cardinality constraints, now statable as
\begin{align}
  \sum_{\ell=1}^{d/|G|} \sum_{m=1}^{|G|}  \left[\bigwedge_{h=2}^j x^{\gamma_m\circ \theta_\ell}_{s_1,s_h}\right] \leq \hspace{1.28mm} d/j & ~~~\forall j\in\integerrange{4}{k} ~\forall s\in \semiorderedPattern{n}{j},\label{eq:groupversion-sopeq-fulfillment-without-j-at-most-three-right-coset} \\
  \sum_{\ell=1}^{d/|G|} \sum_{m=1}^{|G|}  \left[\bigwedge_{h=2}^3 x^{\gamma_m\circ \theta_\ell}_{\sigma(h-1),\sigma(h)}\right] \leq d/3!  & ~~~ \forall \sigma\in \subpermutations{n}{3}, \label{eq:groupversion-explicit-three-subpermutation-fulfillment-right-coset}
\end{align}
where without loss of generality we can require
\begin{align}
  z_{\textrm{cat}}[X^{\gamma_1\circ \theta_\ell}]_{[1:H]} \preceq z_{\textrm{cat}}[X^{\gamma_1\circ \theta_{\ell+1}}]_{[1:H]} & ~~~ \forall \ell \in \natrange{d/q-1}.\label{eq:lexicographically-ordered}
\end{align}

\medskip

The question is now, which subgroups are promising choices, and how many of them should be tested.
Potentially, for each $q$ dividing $n!$ and $d$, there can exist a subgroup of $S_n$ of order $q$ being fruitful for the coset approach.
In~\cite{na2023group} it is shown for their problem setting that for the left-coset approach, a single representative per \emph{conjugacy class} $\operatorname{Cl}(G) := \{\psi G\psi^{-1}:\psi\in S_n\}$ is sufficient to be considered, as it reflects the behavior of all groups being representatives of this class.
For the right-coset approach, all subgroups of a given order have to be examined, where, however, at least the first offset $\theta_1$ can without loss of generality~\cite{na2023group} be assumed as the identity permutation.
We incorporate these insights into our experimental setup too, for shrinking the number of inspected groups, respectively tightening the search space.
Note that for simplicity and for having a richer pool of instances, we disregard the omission of non-maximal subgroups which is additionally implemented in~\cite{na2023group}.

\section{Computational experiments}\label{sec:computational-experiments}

We use \texttt{Julia} in version 1.11.1 for generating strict total order constraints.
The cardinality constraints are obtained by calling the library \texttt{PySAT} in version 0.1.7.\-dev15~\cite{ignatiev2018pysat}.
The latter not only provides an interface to different SAT solvers, but also contains conversion routines to bring the encountered cardinality constraints~\eqref{eq:groupversion-sopeq-fulfillment-without-j-at-most-three},\eqref{eq:groupversion-explicit-three-subpermutation-fulfillment},\eqref{eq:groupversion-sopeq-fulfillment-without-j-at-most-three-right-coset},\eqref{eq:groupversion-explicit-three-subpermutation-fulfillment-right-coset} into pure CNF: Seven encodings are available and we pick the one recommended by the default strategy of \texttt{PySAT}.
For accessing the set of all (conjugacy classes of) subgroups of a given symmetric group $S_n$, we fall back on the \texttt{Julia}-package \texttt{Oscar} in version~1.2.0-dev~\cite{oscar2024computer}, having an interface to \texttt{GAP}~\cite{gap2024groups}.
In the literature, formulations for enforcing that for two binary vectors $a=(a_1,\ldots, a_r)$ and $b=(b_1,\ldots, b_r)$ the lexicographic ordering $a\preceq b$ applies (needed for the symmetry breaking in Section~\ref{sec:a-sat-approach}), have been proposed.
However, as these are unsupported by \texttt{PySAT}, we opted for the ``AND Encoding'' using ``Common Subexpression Elimination'' from~\cite{elgabou2015encoding}, which depends on fresh variables $x_i$, $i=1,\ldots,r-1$, and whose translation to a list of clauses reads as
\begin{align*}
  \neg x_1 \vee b_1 \vee \neg a_1,~ \neg x_1 \vee a_1 \vee \neg b_1,~ & x_1 \vee \neg a_1 \vee \neg b_1,~x_1 \vee a_1 \vee b_1, \\
  x_i \vee \neg x_{i+1},                                              & ~~~i=1,\ldots, r-2,                                     \\
  \neg x_{i+1} \vee b_{i+1} \vee \neg a_{i+1},                        & ~~~i=1,\ldots, r-2,                                     \\
  \neg x_{i+1} \vee a_{i+1} \vee \neg b_{i+1},                        & ~~~i=1,\ldots, r-2,                                     \\
  x_{i+1} \vee \neg b_{i+1} \vee \neg a_{i+1} \vee \neg x_i,          & ~~~i=1,\ldots, r-2,                                     \\
  x_{i+1} \vee b_{i+1} \vee a_{i+1} \vee \neg x_{i},                  & ~~~ i=1,\ldots, r-2,                                    \\
  \neg x_i \vee b_{i+1}\vee \neg a_{i+1},                             & ~~~i=1,\ldots, r-1.
\end{align*}
We chose for our experimental evaluation the SAT solver \texttt{Glucose} in version 4.2.1 with a time limit of 3600 seconds per instance if not stated otherwise.
The experiments were run on a cluster with an Intel(R) Xeon(R) E5-2640 v4 CPU with 2.40GHz and 160GB
RAM running Ubuntu 18.04.6 LTS on a single
thread.
Moreover, we decided not to re-examine the setting for $k=3$ whose properties due to Remarks~\ref{rem:identifiability-perfect-sequence-covering-arrays-rankwise-independent-families-and-further-literature}--\ref{rem:three-restricted-minwise-independence-is-three-rankwise-independence} can be directly read-off the works~\cite{gentle2023polynomial,na2023group} for $n\leq 8$.

\medskip

Table~\ref{tab:subgroups-results} reports obtained computational results as follows.
While the first two columns indicate the choice of parameters $d$, $n$, and $|G|$, the remainder contains results for the left- (shaded background) respectively right-coset approach.
Column \texttt{sg} gives information on how many conjugacy classes of subgroups of $S_n$ of order $|G|$ exist (\texttt{\#exst L}), respectively how many subgroups of $S_n$ of order $|G|$ exist (\texttt{\#exst R}).
The count of subgroups for which the experiment has been carried out can be read off the columns \texttt{\#cnsd L}/\texttt{R}.
The count of subgroups for which the left- respectively right-coset leads to a feasible solution is displayed in \texttt{sg-feas \#L}/\texttt{R}.
This information is flanked by the average time (in seconds), which the solver needed to find such a feasible solution (\texttt{avg\_t[s]} \texttt{L}/\texttt{R}).
The same structure applies for the group of columns \texttt{sg-infeas}, where these counts/measurements for subgroups leading to infeasibility are reported.
The right-coset approach is never run for $|G|=1$, symbolized by the entry ``/'', as this is the non-heuristic approach when just falling back on the left-coset approach.

\setlength{\tabcolsep}{1pt}
\begin{table}[htb!]
  \caption{Results for $k=4$; 3600s time limit given for each run of the SAT solver.
  Asterisks followed by numbers indicate the count of inconclusive experiments, i.e., the solver hit the time limit, for the right-coset approach.}\label{tab:subgroups-results}
      \centering
        \begin{tabular}{|b|b||a|b|a|b|a|b|a|b|a|b|a|b|}
          \hline
          ~          & ~     & \multicolumn{4}{c|}{\texttt{sg}}     & \multicolumn{4}{c|}{\texttt{sg-feas}} & \multicolumn{4}{c|}{\texttt{sg-infeas}}                                                                                                                                                                                                                                 \\
          ~          & ~     & \multicolumn{2}{c|}{\texttt{\#exst}} & \multicolumn{2}{c|}{\texttt{\#cnsd}}  & \multicolumn{2}{c|}{\texttt{\#}}        & \multicolumn{2}{c|}{\texttt{avg\_t[s]}} & \multicolumn{2}{c|}{\texttt{\#}}   & \multicolumn{2}{c|}{\texttt{avg\_t[s]}}\\
          $(d,n)$  & $|G|$ & L                                    & R                                     & L                                       & R                                       & L                                  & R                                       & L                                   & R                                         & L  & R    & L                                   & R                             \\\hline
          $(12,4)$ & 12 & 1 & 1 & 1 & 1 & 1 & 1 & $2.3\cdot 10^{-4}$ & $6.9\cdot 10^{-4}$ & 0 & 0 & / & / \\
           & 6 & 1 & 4 & 1 & 4 & 0 & 0 & / & / & 1 & 4 & $3.8\cdot 10^{-4}$ & $1.2\cdot 10^{-3}$ \\
           & 4 & 3 & 7 & 3 & 7 & 2 & 2 & $2.5\cdot 10^{-4}$ & $6.3\cdot 10^{-4}$ & 1 & 5 & $1.4\cdot 10^{-3}$ & $2.0\cdot 10^{-3}$ \\
           & 3 & 1 & 4 & 1 & 4 & 1 & 3 & $2.5\cdot 10^{-4}$ & $1.2\cdot 10^{-3}$ & 0 & 1 & / & $2.6\cdot 10^{-4}$ \\
           & 2 & 2 & 9 & 2 & 9 & 1 & 0 & $5.9\cdot 10^{-4}$ & / & 1 & 9 & $1.3\cdot 10^{-3}$ & $2.7\cdot 10^{-3}$ \\
           & 1 & 1 & / & 1 & / & 1 & / & $5.9\cdot 10^{-4}$ & / & 0 & / & / & / \\\hline
          $(12,5)$ & 12 & 2 & 15 & 2 & 15 & 0 & 0 & / & / & 2 & 15 & $7.6\cdot 10^{-4}$ & $1.9\cdot 10^{-3}$ \\
           & 6 & 3 & 30 & 3 & 30 & 1 & 0 & $1.1\cdot 10^{-3}$ & / & 2 & 30 & $2.2\cdot 10^{-3}$ & $2.6\cdot 10^{-2}$ \\
           & 4 & 3 & 35 & 3 & 35 & 0 & 0 & / & / & 3 & 35 & $8.7\cdot 10^{-3}$ & $2.1\cdot 10^{-1}$ \\
           & 3 & 1 & 10 & 1 & 10 & 1 & 0 & $8.4\cdot 10^{-4}$ & / & 0 & 10 & / & $4.5\cdot 10^{1\hphantom{-}}$ \\
           & 2 & 2 & 25 & 2 & 25 & 1 & 0 & $1.4\cdot 10^{-2}$ & / & 1 & 25 & $1.5\cdot 10^{-3}$ & $2.1\cdot 10^{-1}$ \\
           & 1 & 1 & / & 1 & / & 1 & / & $3.5\cdot 10^{-3}$ & / & 0 & / & / & / \\\hline
          $(12,6)$ & 1 & 1 & / & 1 & / & 0 & / & / & / & 1 & / & $4.1\cdot 10^{-2}$ & /  \\\hline
          $(24,6)$ & 24 & 6 & 90 & 6 & 90 & 2 & 9 & $2.6\cdot 10^{-3}$ & $8.0\cdot 10^{-3}$ & 4 & 81 & $1.1\cdot 10^{-2}$ & $1.6\cdot 10^{-2}$  \\
           & 12 & 4 & 150 & 4 & 150 & 2 & 12 & $5.4\cdot 10^{-3}$ & $2.5\cdot 10^{-1}$ & 2 & 138 & $3.2\cdot 10^{-3}$ & $4.7\cdot 10^{1\hphantom{-}}$ \\
           *2& 8 & 7 & 255 & 7 & 255 & 2 & 44 & $1.2\cdot 10^{-2}$ & $1.1\cdot 10^{-1}$ & 5 & 209 & $1.0\,\,\,\hphantom{\cdot 10^{-0}}$ & $1.8\cdot 10^{1\hphantom{-}}$ \\
           *7& 6 & 6 & 280 & 6 & 280 & 3 & 40 & $2.1\cdot 10^{-2}$ & $6.2\cdot 10^{-1}$ & 3 & 233 & $6.2\cdot 10^{-2}$ & $4.8\cdot 10^{1\hphantom{-}}$ \\
           *4& 4 & 7 & 255 & 7 & 255 & 5 & 76 & $4.7\cdot 10^{-2}$ & $1.8\,\,\,\hphantom{\cdot 10^{-0}}$ & 2 & 175 & $2.3\cdot 10^{1\hphantom{-}}$ & $9.8\cdot 10^{1\hphantom{-}}$ \\
           *9& 3 & 2 & 40 & 2 & 40 & 1 & 5 & $2.3\cdot 10^{-2}$ & $7.4\,\,\,\hphantom{\cdot 10^{-0}}$ & 1 & 26 & $6.3\cdot 10^{-1}$ & $5.3\cdot 10^{2\hphantom{-}}$ \\
           *18& 2 & 3 & 75 & 3 & 75 & 2 & 35 & $2.0\cdot 10^{-1}$ & $5.5\cdot 10^{2\hphantom{-}}$ & 1 & 22 & $3.9\,\,\,\hphantom{\cdot 10^{-0}}$ & $2.6\cdot 10^{2\hphantom{-}}$ \\
           & 1 & 1 & / & 1 & / & 1 & / & $6.7\cdot 10^{-1}$ & / & 0 & / & / & / \\\hline
          $(24,7)$ & 24 & 14 & 1435 & 14 & 1435 & 0 & 0 & / & / & 14 & 1435 & $2.9\cdot 10^{-2}$ & $8.0\cdot 10^{-2}$ \\
          *45& 12 & 13 & 1715 & 13 & 1715 & 2 & 0 & $4.2\cdot 10^{-2}$ & / & 11 & 1670 & $1.8\cdot 10^{-1}$ & $1.8\cdot 10^{1\hphantom{-}}$ \\
           *33& 8 & 7 & 1575 & 7 & 1575 & 0 & 0 & / & / & 7 & 1542 & $2.8\,\,\,\hphantom{\cdot 10^{-0}}$ & $3.4\cdot 10^{1\hphantom{-}}$ \\
           *354& 6 & 8 & 1645 & 8 & 1597 & 3 & 0 & $1.8\cdot 10^{-1}$ & / & 5 & 1243 & $5.2\cdot 10^{-1}$ & $2.1\cdot 10^{2\hphantom{-}}$ \\
           *23& 4 & 7 & 1295 & 7 & 104 & 3 & 0 & $7.8\cdot 10^{-1}$ & / & 4 & 81 & $1.0\cdot 10^{1\hphantom{-}}$ & $6.0\cdot 10^{2\hphantom{-}}$ \\
           *67& 3 & 2 & 175 & 2 & 81 & 1 & 0 & $7.0\,\,\,\hphantom{\cdot 10^{-0}}$ & / & 1 & 14 & $7.6\,\,\,\hphantom{\cdot 10^{-0}}$ & $8.0\cdot 10^{2\hphantom{-}}$ \\
           *63& 2 & 3 & 231 & 3 & 65 & 2 & 1 & $7.6\,\,\,\hphantom{\cdot 10^{-0}}$ & $3.2\cdot 10^{3\hphantom{-}}$ & 1 & 1 & $1.0\cdot 10^{3\hphantom{-}}$ & $1.8\cdot 10^{2\hphantom{-}}$ \\
           & 1 & 1 & / & 1 & / & 1 & / & $6.9\cdot 10^{1\hphantom{-}}$ & / & 0 & / & / & /\\\hline
        \end{tabular}
\end{table}

From the experiments in Table~\ref{tab:subgroups-results} we can derive the following information.
\begin{proposition}\label{existence-results-small-scale-four-to-seven-many-are-optimal}
  For $n\in\{4,5\}$, respectively $n\in\{6,7\}$, there are optimal $4$-restricted minwise independent families of $12$ members, respectively $24$ members, which are all coincident with a union of cosets of $S_n$.
  For $n=4$ respectively $n=6$ even a single coset of cardinality $12$ respectively $24$ constituting the family can be found.
\end{proposition}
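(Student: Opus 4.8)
The plan is to combine the lower bound of Theorem~\ref{thm:asymptotic-overview}, part~\eqref{ite:lower-bound-minwise-independent-families}, with explicit satisfying assignments — and one unsatisfiability certificate — produced by the SAT models of Section~\ref{sec:a-sat-approach}. For $k=4$ we have $\lcm{\natrange{4}}=12$, so the reformulation~\eqref{eq:minwise-independence-definition-via-cardinality-counting} forces the number of members $d$ of any $4$-restricted minwise independent family to be a positive multiple of $12$, and the cited theorem additionally gives $d\ge n$. Thus for $n\in\{4,5\}$ every such family already has $d\ge 12$, whereas for $n\in\{6,7\}$ this reasoning only yields $d\ge 12$ and the value $d=12$ must be excluded separately.

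First I would treat $n\in\{4,5\}$. Running the non-heuristic model of Section~\ref{sec:a-sat-approach} (equivalently, the coset model with $G=\{\identityperm\}$) for $(d,n)=(12,4)$ and $(12,5)$ yields a satisfying assignment, from which the permutations are recovered via $\pi_\ell(j)=1+\sum_{i=1}^n x^{\pi_\ell}_{i,j}$. Because the upper-bound cardinality constraints~\eqref{eq:sopeq-fulfillment-without-j-at-most-three}--\eqref{eq:explicit-three-subpermutation-fulfillment} are equivalent to the defining equalities (as argued after their statement) and the symmetry-breaking clauses are imposed without loss of generality, the family so obtained is genuinely $4$-restricted minwise independent, hence optimal by the previous paragraph. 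Replacing $G=\{\identityperm\}$ by a non-trivial subgroup of admissible order and re-solving produces, by the very form of~\eqref{eq:family-composed-of-leftcosets} (and of~\eqref{eq:family-composed-of-rightcosets} for the right-coset variant), a family that is literally a union of $12/|G|$ cosets of a subgroup of $S_n$; for $n=4$ one checks that $|G|=12$ is feasible, and since $A_4$ is the unique order-$12$ subgroup of $S_4$ the resulting family is a single left-coset $\theta_1 A_4$, while for $n=5$ one records from Table~\ref{tab:subgroups-results} the smallest feasible $|G|>1$.

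Next I would treat $n\in\{6,7\}$. The decisive step is to show that no $4$-restricted minwise independent family with $d=12$ exists in $S_6$; I would run the non-heuristic model for $(d,n)=(12,6)$ and obtain UNSAT, ideally exporting and independently checking a DRAT refutation. By the transformation of~\cite{itoh2000permutations} (recalled in the remark following Theorem~\ref{thm:asymptotic-overview}), non-existence for $n=6$, $d=12$ implies non-existence for $n=7$, $d=12$; together with $12\mid d$ this forces $d\ge 24$ for both $n=6$ and $n=7$. It then remains to exhibit $24$-member families for $n=6$ and for $n=7$ by solving the corresponding models, and feeding in a non-trivial parameter group as above certifies their union-of-cosets structure — for $n=6$, the feasibility of $|G|=24$ yields a single coset of cardinality $24$.

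I expect the main obstacle to be the UNSAT result for $(d,n)=(12,6)$: in contrast to the existence claims, which are self-certifying once a concrete family is displayed, this is a non-existence statement whose only practical justification is the solver run, so one must be careful that the encoding is sound — in particular that the ``without loss of generality'' symmetry-breaking clauses preserve at least one solution whenever one exists — and, ideally, back it up with a machine-checkable proof of unsatisfiability. A secondary, purely bookkeeping obstacle is to make the union-of-cosets phrasing precise, namely to observe that any satisfying assignment of the coset model with parameter group $G$ already determines the family $(\theta_\ell\circ\gamma_m)_{(\ell,m)}$, which is by definition a union of $d/|G|$ left-cosets, and to read off from Table~\ref{tab:subgroups-results} which groups actually work for each $n$.
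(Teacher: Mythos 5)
Your proposal is correct and follows essentially the same route as the paper: optimality from the divisibility constraint $12=\lcm{\natrange{4}}$, exclusion of $d=12$ for $n\in\{6,7\}$ via the UNSAT run for $(d,n)=(12,6)$ together with the transformation recalled after Theorem~\ref{thm:asymptotic-overview}, and existence plus the coset structure read off from the satisfiable entries of Table~\ref{tab:subgroups-results} (single cosets for $(12,4)$ with $|G|=12$ and $(24,6)$ with $|G|=24$). Your additional remarks on certifying the UNSAT result and on the soundness of the symmetry breaking are sensible but go beyond what the paper's computer-aided proof records.
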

\begin{proof}
  We have $\lcm{\natrange{4}}=12$.
  Therefore the representatives found in Table~\ref{tab:subgroups-results} are minimally-sized for $n\in\{4,5\}$.
  On the other hand, as according to entry $(12, 6, 4)$, $|G|=1$, in Table~\ref{tab:subgroups-results}, no minwise independent family $\mathcal{F}\subseteq S_6$ consisting of $12$ members exists, $d=24$ is the smallest next cardinality being a multiple of $12$ and for which the existence of such families should be explored.
  For $d=24$ and $n\in\{6,7\}$ such representatives, again being decomposable into cosets, indeed exist; see Table~\ref{tab:subgroups-results}.
\end{proof}
It is an interesting insight that consistently in Table~\ref{tab:subgroups-results} for any considered subgroup order, whenever non-decomposability into left-cosets occurs, also non-decomposability into right-cosets is observable.
On the other hand, for several subgroup orders, for which the left-coset approach was fruitful, the right-coset approach was in contrast not fruitful.
Moreover, the more complex SAT model for the right-coset approach considerably impacts the runtime by one to two orders of magnitude in comparison to the left-coset approach for feasible but also infeasible instances.
Due to these two observations we conclude that in particular for higher values of $d$ and $n$ we should rather focus on the left-coset approach to check more and apparently more promising instances within a fixed and limited margin of computation time.

\begin{proposition}\label{pro:nonexistence-four-restricted-minwise-independent-family--on-eight-symbols-on-twentyfour-members}
  There is no $4$-restricted minwise independent family $\mathcal{F}\subseteq S_8$ consisting of $24$ members.
\end{proposition}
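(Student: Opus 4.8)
The plan is to reduce the statement to the unsatisfiability of a single propositional formula and to discharge it with a SAT solver. Concretely, I would instantiate the \emph{complete} (non-heuristic) model of Section~\ref{sec:a-sat-approach}, i.e.\ the one obtained by fixing the trivial subgroup $\{\gamma_1\}=\{\identityperm\}$, with the parameters $n=8$, $d=24$, $k=4$: the strict-total-order clauses~\eqref{eq:irreflexitivity}--\eqref{eq:transitivity} for $\ell\in\natrange{24}$, the cardinality constraints~\eqref{eq:sopeq-fulfillment-without-j-at-most-three} for $j=4$ and $s\in\semiorderedPattern{8}{4}$ together with~\eqref{eq:explicit-three-subpermutation-fulfillment} for $\sigma\in\subpermutations{8}{3}$, and the symmetry-breaking constraints~\eqref{eq:lexicographically-ordered-subblocks} and $\pi_1=\identityperm$. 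The key point to make explicit is soundness \emph{and} completeness of this model: its satisfying assignments are exactly the incidence-matrix encodings of those $4$-restricted minwise independent families $(\pi_1,\ldots,\pi_{24})\subseteq S_8$ that additionally obey the symmetry-breaking clauses, and --- by the transformations justified in Section~\ref{sec:a-sat-approach}, namely that fixing $\pi_1=\identityperm$ is without loss of generality and that sorting the members by $z_{\textrm{cat}}$ never removes every representative of a member-reordering orbit (this remains true, a fortiori, for the truncated constraint used in the model) --- every such family admits a representative of this shape. Therefore ``no family exists'' is \emph{equivalent} to unsatisfiability of this CNF. Note that the coset-based models with $|G|>1$ cannot be used for this purpose, being incomplete: an unsatisfiable coset model only certifies the absence of families decomposable into cosets of the chosen subgroup.

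Before running anything, I would record why $d=24$ is the cardinality to settle. The lower bound of Theorem~\ref{thm:asymptotic-overview} forces at least $\max\{8,\lcm{\natrange{4}}\}=12$ members, and $\lcm{\natrange{4}}=12$ must divide the number of members, so the admissible cardinalities for $S_8$ are $12,24,36,\ldots$. The case $d=12$ is already excluded for $S_8$: the entry $(12,6,1)$ of Table~\ref{tab:subgroups-results} shows the complete model to be infeasible for $n=6$, and by the transformation recalled in the remark following Theorem~\ref{thm:asymptotic-overview} infeasibility at $n=6$ propagates to every $n\geq 6$. Hence $d=24$ is the first open cardinality for $n=8$; and --- unlike for $n\in\{6,7\}$, where Table~\ref{tab:subgroups-results} already exhibits $24$-member families --- no propagation from a smaller symbol count is available, so the instance $(n,d)=(8,24)$ must be attacked head-on.

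I would then generate the CNF exactly as described in Section~\ref{sec:computational-experiments} (order constraints in \texttt{Julia}, cardinality constraints via \texttt{PySAT}'s default encoding, lexicographic comparisons via the AND/CSE encoding of~\cite{elgabou2015encoding}), run \texttt{Glucose}, and report that it returns \texttt{UNSAT}. To make the certificate independent of trusting the solver, I would additionally have \texttt{Glucose} emit a \texttt{DRAT} refutation and verify it with an independent proof checker; the proposition then rests only on the elementary, hand-checkable faithfulness of the CNF translation of~\eqref{eq:irreflexitivity}--\eqref{eq:explicit-three-subpermutation-fulfillment} and of the symmetry-breaking clauses.

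The main obstacle is computational, not conceptual. The instance is sizeable --- $24\cdot\binom{8}{2}=672$ order variables before auxiliaries, transitivity clauses over triples of $\natrange{8}$, and $|\semiorderedPattern{8}{4}|=\binom{8}{3}\cdot 5=280$ quartic together with $|\subpermutations{8}{3}|=336$ cubic cardinality constraints, each of which expands considerably under CNF conversion --- so the genuine risk is that the solver merely hits the time limit, which would prove nothing. To make termination with \texttt{UNSAT} realistic I would use the full-strength symmetry breaking $H=n(n-1)/2$ rather than the truncated one, and, if necessary, shrink the search space further by also quotienting out the relabeling symmetry (conjugation by $S_8$) and the order-reversal symmetry underlying Theorem~\ref{thm:bargachev-first-right-coset-approach-for-minwise-independence}. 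Should the monolithic solve still be out of reach, the fallback is a bounded case split: fix $\pi_2$ (and perhaps $\pi_3$) up to the remaining symmetries, refute each branch separately, and combine the branch certificates.
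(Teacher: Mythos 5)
Your proposal matches the paper's own proof in essence: the paper establishes this proposition computer-aidedly by running the complete (non-heuristic, $|G|=1$) SAT model for $(d,n,k)=(24,8,4)$ and reporting unsatisfiability (after $64.7$ hours), exactly the route you describe. Your additional remarks---why the coset models are inadmissible for a nonexistence claim, why $d=24$ is the cardinality to settle, and the suggestion to verify a \texttt{DRAT} refutation independently---are correct strengthenings rather than deviations.
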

\begin{proof}[computer-aided]
  The respective model turned out to be unsatisfiable after $64.7$ hours of computation time.
\end{proof}

Without time measurements and only reporting results for one fruitful subgroup of greatest cardinality, we add the following observation.
\begin{proposition}\label{pro:larger-d-and-n-values-experiments}
  $k$-restricted minwise independent families $\mathcal{F}\subseteq S_n$ with $d=|\mathcal{F}|$, which coincide with a union of left-cosets of $S_n$, exist for constellations
  \begin{equation*}
    (d,n,k;|G|) \in \{(60,6,6;6), (60,6,5;6), (60,5,5;60), (48,8,4;24)\}.
  \end{equation*}
\end{proposition}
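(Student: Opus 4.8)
The statement is constructive, so the plan is to instantiate the left-coset SAT model of Section~\ref{sec:a-sat-approach} for each constellation $(d,n,k;|G|)$ and extract an explicit witness. Fixing $d,n,k$ as prescribed, I would enumerate the conjugacy classes of subgroups of $S_n$ of order $|G|$ through the \texttt{GAP}/\texttt{Oscar} interface; for a representative $G=\{\gamma_1=\identityperm,\ldots,\gamma_{|G|}\}$ I post the offset-permutation clauses~\eqref{eq:irreflexitivity}--\eqref{eq:transitivity} on $X^{\theta_1},\ldots,X^{\theta_{d/|G|}}$, the group-version cardinality constraints~\eqref{eq:groupversion-sopeq-fulfillment-without-j-at-most-three}--\eqref{eq:groupversion-explicit-three-subpermutation-fulfillment}, the symmetry-breaking clauses~\eqref{eq:lexicographically-ordered-subblocks}, and the unit clauses forcing $\theta_1=\identityperm$, then call the solver. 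If the solver reports satisfiability, it returns offsets $\theta_1,\ldots,\theta_{d/|G|}\in S_n$, and the family in question is $\mathcal{F}=(\theta_\ell\circ\gamma_m)_{(\ell,m)}$ of~\eqref{eq:family-composed-of-leftcosets}, which is a union of left-cosets of $G\le S_n$ by construction.

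To promote solver output to a proof I would add an \emph{independent} verification of the returned $\mathcal{F}$: recompute, for every $j\in\natrange{k}$ and every $s\in\semiorderedPattern{n}{j}$, the cardinality on the left of~\eqref{eq:minwise-independence-definition-via-cardinality-counting} directly from the explicit permutations and confirm it equals $d/j$; for $j\le 3$ one may equivalently check the $3$-rankwise count~\eqref{eq:rankwise-indep-formulation}, by Remark~\ref{rem:three-restricted-minwise-independence-is-three-rankwise-independence}. This is a finite check that does not depend on the correctness of the CNF encoding or of the SAT solver, and the concrete offsets found can be listed so the verification is reproducible.

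The only real obstacle is computational: the heavy instance is $(48,8,4;24)$, where $\semiorderedPattern{8}{4}$ and $\subpermutations{8}{3}$ are already large and, because $d=24$ is excluded by Proposition~\ref{pro:nonexistence-four-restricted-minwise-independent-family--on-eight-symbols-on-twentyfour-members}, one must reach $d=48$; here the choice of the order-$24$ subgroup $G$ and of the accuracy parameter $H$ in~\eqref{eq:lexicographically-ordered-subblocks} decides solvability within the time budget. By contrast, $(60,5,5;60)$ needs no solver: with $d/|G|=1$ the family is a single left-coset $\theta_1 G$ of the unique index-$2$ subgroup $G=A_5\le S_5$, and right-composing every member with $\theta_1^{-1}$ preserves $k$-restricted minwise independence (it is a relabeling of the domain) while, since $A_5\trianglelefteq S_5$, it reduces the claim to showing that $A_5$, in its natural action on $\{1,\ldots,5\}$, is $5$-restricted minwise independent. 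As $A_n$ is $(n-2)$-transitive and $|A_5|=5\cdot4\cdot3$, this action is sharply $3$-transitive, which delivers the $j\le 3$ counts at once, and $j=4,5$ follow from a short analysis of which image among the chosen points is smallest together with a parity count in $A_5$. Thus only $(60,6,6;6)$, $(60,6,5;6)$, and $(48,8,4;24)$ genuinely depend on the search.
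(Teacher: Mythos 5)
Your proposal matches the paper's proof, which is likewise purely computer-aided: the authors simply provide solver-found certificates (explicit left-coset families for each constellation) as supplementary material, relying on exactly the kind of finite, encoding-independent verification of~\eqref{eq:minwise-independence-definition-via-cardinality-counting} that you describe. Your extra observation that $(60,5,5;60)$ needs no search---by normality of $A_5$ in $S_5$ and a parity/relabeling argument it reduces to checking that $A_5$ itself is $5$-restricted minwise independent---is correct and goes slightly beyond the paper, but does not change the overall approach.
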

\begin{proof}
  Certificates for this assertion are provided as supplementary material\footnote{\url{https://www.ac.tuwien.ac.at/files/resources/instances/minw-indep/supp.pdf}}.
\end{proof}
We also remark that for $k\in\{5,6\}$ we have $\lcm{\natrange{5}}=\lcm{\natrange{6}}=60$ and thus the respective instances in Proposition~\ref{pro:larger-d-and-n-values-experiments} are even optimal.
We highlight suboptimality for $(d,n,k)=(48,8,4)$, as we can construct an instance with constellation $(d,n,k)=(36,8,4)$ by applying Theorem~\ref{thm:bargachev-first-right-coset-approach-for-minwise-independence} to the $3$-restricted minwise independent family derivable from $\mathcal{F}\subseteq S_8$ of $18$ members in~\cite[Proposition~4.6]{na2023group}---due to Proposition~\ref{pro:nonexistence-four-restricted-minwise-independent-family--on-eight-symbols-on-twentyfour-members} the end product is even optimal.
However, without the latter trick the solver, even with runtimes up to $48$ hours, was not able to find any respective family counting $36$ members via decomposition into left-cosets.
Finally, we point out the particular situation occurring for $(d,k)=(12,4)$, see Table~\ref{tab:subgroups-results}, where for $n=4$ no decomposition into left-cosets of cardinality $6$ is possible, while this is the case for $n=5$, which intuitively is a stronger constraint setting.
This, apparently, is explainable by the considerably differing nature of the cardinality-$6$ subgroups of $S_5$.
At the same time we notice that for $|G|=24$ the left-coset approach worked for $n=4$ but not for $n=5$.

\section{Bijecting derangements onto non-waste permutations}\label{sec:bijecting}

This section settles the challenge of Bargachev~\cite[p.~6]{bargachev2004improved} to construct a bijection between two permutations classes exploited for obtaining the best known lower bound in Theorem~\ref{thm:asymptotic-overview}~\eqref{ite:bargachev-lower-bound}.
For this, the (later defined) class of permutations having $k$ \emph{waste indices}, denoted as $\wasteclass{n}{k}\subseteq S_n$, is introduced.
The pursued approach relies then on an algebraic proof of the coincidence of the cardinality of $\wasteclass{n}{k}$ with the count of \emph{$k$-partial derangements} in $S_n$, i.e., permutations with exactly $k\in\integerrange{0}{n}$ fixed points, whose class is henceforth denoted as $\kderang{n}{k}\subseteq S_n$.
However, Bargachev emphasizes the lack of a known explicit bijection $\Phi:\kderang{n}{k}\to \wasteclass{n}{k}$, even for $k=0$.

We fill this gap by more generally giving a bijection for \emph{arbitrary} $k$ having a remarkably succinct description.

\medskip

Let $\pi\in S_n$.
An index $j\in\natrange{n}$ is \emph{waste}~\cite{bargachev2004improved} if either the special case $j=n$ and $\min\{\pi(\ell): \ell=1,\ldots, n\} = \pi(n) = 1$ applies, or alternatively if we have $j<n$ with
\begin{equation}
  \min\{\pi(\ell): \ell=1,\ldots, j\} = \pi(j)\text{ and }\pi(j) > \pi(j+1).\label{eq:minimum-formulation-waste-index}
\end{equation}
Denote by $\wasteclass{n}{k}\subseteq S_n$ the set of all permutations possessing precisely $k\in [0:n]$ waste indices.

We now mimic and generalize the approach in~\cite[Proposition~2.1]{zhang2022combinatorics} relying on the cycle notation of permutations; see also~\cite{desarmenien1984autre}.
Retrospectively, this will highlight that analogues of the permutation classes introduced out of combinatorial interest in~\cite{zhang2022combinatorics} actually appear in more applied contexts.
Moreover, this gives a simpler alternative to the proof of~\cite[Lemma~4]{bargachev2004improved}, the latter requiring to solve a technical recurrence relation.
\begin{theorem}\label{thm:bijection-derangements-onto-waste-permutations}
  For $n\in\mathbb{N}$ and $k\in\integerrange{0}{n}$ there is an (explicitly describable) bijection $\Phi: \kderang{n}{k} \to \wasteclass{n}{k}$.
\end{theorem}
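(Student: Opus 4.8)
The plan is to realize $\Phi$ via the cycle notation of permutations, following the template of~\cite[Proposition~2.1]{zhang2022combinatorics}. First I would fix a canonical way of writing any $\pi\in S_n$ in cycle notation: write each nontrivial cycle starting with its largest element, and then list the cycles in increasing order of these largest elements (this is the standard ``fundamental bijection'' normalization, a.k.a.\ Foata-type canonical form). Reading off the resulting word $w(\pi)$ of length $n$ by deleting the parentheses gives a bijection between $S_n$ and the set of words that are permutations of $\natrange{n}$; fixed points of $\pi$ appear exactly as the singleton cycles, i.e.\ as positions where a ``left-to-right minimum read backwards'' pattern degenerates. The point is to recognize that the number of fixed points of $\pi$ equals the number of a certain marked positions in $w(\pi)$, while the number of waste indices of a suitably related permutation equals the number of the \emph{complementary} marked positions.

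Next I would make the connection to waste indices explicit. Given $\pi\in\kderang{n}{k}$, form the word $w(\pi)=w_1 w_2\cdots w_n$ as above, and \emph{define} $\Phi(\pi)$ to be the permutation whose one-line (tuple) notation is exactly $w_1 w_2\cdots w_n$ (or its reversal, or its complement $i\mapsto n+1-i$ composed with it --- the correct twist is to be pinned down by checking small cases, e.g.\ matching the $n=2,3$ counts against $\wasteclass{n}{k}$). The key claim to verify is then: an index $j$ is a waste index of $\Phi(\pi)$, in the sense of~\eqref{eq:minimum-formulation-waste-index} together with the $j=n$ boundary case, precisely when position $j$ in $w(\pi)$ plays the structural role of ``closes a cycle'' (equivalently, the next entry starts a new, strictly-larger-headed cycle, or we are at the end of the word). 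Since a singleton cycle both opens and closes at the same position, the $k$ fixed points of $\pi$ correspond bijectively to $k$ such cycle-closing positions, hence to $k$ waste indices of $\Phi(\pi)$; and conversely the construction is invertible because from a word one can uniquely reinsert the parentheses (a new cycle starts exactly at each left-to-right-from-the-appropriate-side maximum). This gives a two-sided inverse, so $\Phi$ is a bijection $\kderang{n}{k}\to\wasteclass{n}{k}$.

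The main obstacle I expect is the bookkeeping in the equivalence ``cycle-closing position of $w(\pi)$'' $\Longleftrightarrow$ ``waste index of $\Phi(\pi)$'': the waste condition~\eqref{eq:minimum-formulation-waste-index} is phrased via running minima and a strict descent $\pi(j)>\pi(j+1)$, whereas the canonical cycle form is naturally phrased via running maxima of cycle-heads. Reconciling these forces the ``correct twist'' (reversal and/or order-reversing conjugation $\sigma\circ(-)\circ\sigma$ with $\sigma:i\mapsto n+1-i$) mentioned above, and one must handle the two boundary subtleties carefully: the special role of index $n$ in the definition of waste, and the fact that a fixed point is a cycle that is simultaneously opened and closed. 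I would settle the twist by a short computation on $S_2$ and $S_3$, confirm that the singleton-cycle/fixed-point count is preserved, and then give the general argument by tracking, for each $j$, whether $w_{j+1}$ exceeds all previously seen cycle-heads --- which is exactly the combinatorial event that, under the twist, becomes the running-minimum-plus-descent event~\eqref{eq:minimum-formulation-waste-index}. Everything else (well-definedness, injectivity, surjectivity) then follows from the standard invertibility of the parenthesization step, so the proof reduces to this one structural lemma plus a base-case sanity check.
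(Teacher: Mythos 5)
Your overall construction (a Foata-type canonical cycle word, up to an order-reversing twist) is essentially the paper's route: with each cycle headed by its smallest element and the cycles listed so that these heads decrease, reading off the entries is exactly the map $\Phi$ used in the paper, and your injectivity/surjectivity argument via unique re-parenthesization is sound. The genuine gap is your key structural lemma: waste indices of $\Phi(\pi)$ do \emph{not} correspond to cycle-closing positions of the word. The number of cycle-closing positions equals the number of cycles of $\pi$, whereas $\Phi(\pi)$ must have exactly $k$ waste indices, i.e.\ as many as $\pi$ has \emph{fixed points}; these counts differ as soon as $\pi$ has a cycle of length at least two. Concretely, for $n=2$, $k=0$ and $\pi$ the transposition, the (correctly twisted) word is $(1,2)$: its unique cycle closes at position $2$, yet $(1,2)$ has no waste index (position $1$ is a left-to-right minimum without a descent, and position $n$ fails the boundary case because the last value is not $1$). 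Under your claimed equivalence $\Phi(\pi)$ would have as many waste indices as $\pi$ has cycles, so you cannot even conclude $\Phi(\pi)\in\wasteclass{n}{k}$; note also that your own sentence silently slides from ``all cycle-closing positions'' to ``the $k$ positions coming from singleton cycles'', which are different sets. Leaving the twist to be ``pinned down later'' is harmless by comparison, but it is the wrong lemma, not the twist, that breaks the count.

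The missing step is the correct characterization, which is what the paper proves: under the min-first, decreasing-heads normalization, the positions satisfying the minimization part of~\eqref{eq:minimum-formulation-waste-index} are exactly the cycle heads (a non-head entry is never a left-to-right minimum, since its smaller cycle head precedes it), and a head additionally has the required descent $\pi(j)>\pi(j+1)$ iff its cycle is a singleton, because the head of a cycle of length at least two is followed by a larger element of the same cycle; the boundary case $j=n$ with last value $1$ occurs exactly when the final cycle is the singleton $(1)$. Hence the waste indices of $\Phi(\pi)$ correspond to the singleton cycles, i.e.\ to the fixed points of $\pi$, giving the count $k$. With that lemma substituted for yours, the rest of your argument (well-definedness, two-sided invertibility by segmenting at the left-to-right minima) goes through and coincides with the paper's proof.
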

\begin{proof}
  Suppose $\pi\in\kderang{n}{k}\subseteq S_n$.
  Let us define the transformed $\Phi(\pi)$ as the result of the following procedure:
  Write down $\pi$ in cycle notation, where we assume that cycles  coincide with their cycle-representative placing the smallest entry on the cycle's first (i.e., lefter-most) position.
  Afterwards, the cycles are ordered from left to right such that their respective first entries decrease; see Example~\ref{exa:derangement-nive-four}.
  The final output $\Phi(\pi)$ is then defined as the permutation which maps $1,\ldots,n$ to the cycles' entries read from left to right.

  Now observe that the result $\Phi(\pi)$ is an element of $\wasteclass{n}{k}$:
  By construction, there will be precisely $k$ length-$1$ cycles in the cyclic notation of $\pi$.
  Each position of a length-$1$ cycle has an associated waste index:
  Note that the minimization in~\eqref{eq:minimum-formulation-waste-index} but also the the second, decreasing, behavior in~\eqref{eq:minimum-formulation-waste-index} holds true due to the ranked ordering of cycles.
  On the other hand, in cycles of length at least two, by construction, cycle-entries appearing not in the first position cannot be minimizers in~\eqref{eq:minimum-formulation-waste-index}.
  Moreover, the first entries of the cycles cannot meet the second condition of decrease in~\eqref{eq:minimum-formulation-waste-index}.
  No additional waste-indices originate from cycles of length at least two and therefore $\Phi(\pi)\in\wasteclass{n}{k}$.

  \medskip

  We show that $\Phi$ is surjective.
  For any $\tau\in\wasteclass{n}{k}$ we find $\pi\in\kderang{n}{k}$ such that $\Phi(\pi) = \tau$:
  Determine all indices $j_1<j_2< \ldots<j_\ell$ that satisfy the minimization-condition in~\eqref{eq:minimum-formulation-waste-index}.
  Using these indices, segment $\tau$ into $\ell$ blocks
  \begin{equation}
    (\tau_{j_1}, \tau_{j_1+1}, \ldots, \tau_{j_2-1}), (\tau_{j_2}, \tau_{j_2+1},\ldots, \tau_{j_3-1}), \ldots, (\tau_{j_\ell}, \tau_{j_\ell+1},\ldots, \tau_{n}).\label{eq:segmentation}
  \end{equation}
  Let $\pi$ be the permutation which results from interpreting the $\ell$ blocks in~\eqref{eq:segmentation} as cycles, i.e.,
  \begin{equation}
    \pi = (\tau_{j_1}\tau_{j_1+1}\cdots\tau_{j_2-1})(\tau_{j_2}\tau_{j_2+1}\cdots\tau_{j_3-1})\cdots(\tau_{j_\ell}\tau_{j_\ell+1}\cdots\tau_{n}).\label{eq:defined-via-blocks}
  \end{equation}
  As there are precisely $k$ non-waste indices for $\tau$, there are precisely $k$ length-$1$ blocks in~\eqref{eq:segmentation} and therefore $k$ length-$1$ cycles in~\eqref{eq:defined-via-blocks}.
  Eventually, this means that $\pi$ has exactly $k$ fixed points and that $\Phi(\pi)=\tau$.

  Next we show the injectivity of $\Phi$.
  The transition from a permutation to its decomposition into cycles is injective.
  For permutations from $\kderang{n}{k}$, this extends to the linear traversal of its cycles' members:
  In fact, the imposed decreasing ordering among the cycles' leaders and the minimum value attained by each leader in its cycle permit just one unique segmentation into cycles.
\end{proof}
\begin{example}\label{exa:derangement-nive-four}
  Let $(\rho(j))_{j=1}^9 = (1, 5, 3, 4, 6, 2, 8, 7, 9)$, and thus $\rho\in\kderang{9}{4}$.
  Then, its cycle notation, respectively $\Psi(\rho)$ constructed in Theorem~\ref{thm:bijection-derangements-onto-waste-permutations}, is given by
  \begin{equation*}
    \rho = (9)(78)(4)(3)(256)(1),\text{ respectively }\Phi(\rho) = (9,7,8,4,3,2,5,6,1).
  \end{equation*}
\end{example}
\begin{remark}
  Theorem~\ref{thm:bijection-derangements-onto-waste-permutations} automatically implies the validity of~\cite[Lemma 4]{bargachev2004improved} affirming that when the concept of waste indices is naturally lifted to subpermutations from $\subpermutations{n}{m}$ (compare~\cite[p.~3--4]{bargachev2004improved}), then for each fixed subset $X\subseteq \natrange{n}$ with $|X|=m$, we have $|\{\sigma\in\subpermutations{n}{m}:\sigma\text{ has codomain } X\text{ and no waste index}\}|={!m}$.
\end{remark}

\section{Conclusion}

We developed SAT models useful for examining the existence of (near-) optimal $k$-restricted minwise independent families $\mathcal{F}\subseteq S_n$ for reasonably small values of $n$ and $k$.
Besides finding many so-far unknown minwise independent families, our results show the following key observation concerning these structures:
Many optimal representatives of such families carry indeed the aforementioned group theoretic structure.
A further benefit of the decomposition is that it allows to find feasible representatives by avoiding prohibitively long running times, which in the setting of a purely non-heuristic approach, in contrast, would be needed.
We rounded our considerations up by describing a simple bijection leading to a short variant of a proof due to Bargachev~\cite{bargachev2004improved} that addresses rankwise independence.

The following aspects seem interesting for further work.
Firstly, examining the performance of an Integer Linear Programming approach, which could address the cardinality constraints straightforwardly, would be interesting, too.
Secondly, it would be interesting to see if instances so-far unsolved could be attacked by an incremental SAT approach.
Finally, given the large pool of subgroups for the right-coset approach, a further question is, if a solver can learn clauses from a small subpool of the subgroup-associated instances and then run faster on the remaining instances.

\subsubsection*{Acknowledgments}
  This research was funded in part by the program VGSCO of the Austrian Science Fund (FWF) [10.55776/W1260-N35].

\bibliographystyle{splncs04}
\end{document}